\newtheorem{theorem}{Theorem}
\newtheorem{lemma}{Lemma}
\newtheorem{proposition}{Proposition}
\newtheorem{remark}{Remark}
\def \beq{ \begin{equation}}
\def \eeq{\end{equation}}
\date{}
\title{On the Nonexistence of Centered Co-Circular Central Configurations  With Three Unequal masses}
\begin{document}

	\maketitle
	\markboth{Zhengyang Tang, Shuqiang Zhu}{  }
	\vspace{-0.5cm}
	\author       
	\bigskip
	\begin{center}
		{Zhengyang Tang$^1$, Shuqiang Zhu$^2$}\\
		{\footnotesize 
			
			School of  Mathematics,  Southwestern University of Finance and Economics, \\
			
			Chengdu 611130, China\\
			
			$^1$\texttt{kevintang\_2003@126.com},  $^2$\texttt{zhusq@swufe.edu.cn}
			
		}
		
	\end{center}

	\begin{abstract}
			This paper examines the existence of centered co-circular central configurations in the general power-law potential \( n \)-body problem. We prove the nonexistence of such configurations when the system consists of \( n-3 \) equal masses and three arbitrary masses, under the condition that the three special masses are distinct or, if two of them are equal, not arranged in a specific manner. 
	\end{abstract}

	\textbf{Keywords:}: {  Central configuration; n-body problem;  Centered co-circular central configuration}
	
	\textbf{2020AMS Subject Classification}: {  70F10,   70F15}.

	\section{Introductions}

	Central configurations are crucial solutions in the Newtonian \( n \)-body problem. They naturally arise when searching for self-similar solutions and play a key role in identifying bifurcations in the topology of integral manifolds \cite{Smale1970-2, Saari1980, Hampton2006, Albouy2012-1, Albouy2012-2}. These configurations also have important applications in astronomy and spacecraft mission design \cite{Marsden2006}.
	
	In this work, we focus on central configurations where all bodies lie on a common circle and the center of mass coincides with the center of the circle. These are referred to as \emph{centered co-circular central configurations}.
	
	Any regular polygon with equal masses forms a centered co-circular central configuration \cite{Perko1985, Wang2019}. When investigating the question: \emph{Do planar choreography solutions exist when the masses are not all equal?}, Chenciner posed a related problem \cite{Chenciner2004}: \emph{Is the regular polygon with equal masses the unique centered co-circular central configuration?}  This question was later included in a well-known list of open problems on the classical \( n \)-body problem, compiled by Albouy, Cabral, and Santos \cite{Albouy2012-1}. Hampton confirmed the uniqueness for \( n = 4 \) \cite{Hampton2005}, and Wang’s recent research extended the result to \( n = 5 \) and \( n = 6 \) \cite{Wang2023}.
	
	This question is also considered in the general power-law potential \( n \)-body problem, where the potential is given by$
	U_\alpha = \sum_{i< j} \frac{m_i m_j}{|q_i - q_j|^\alpha}$. 
	When \( \alpha = 1 \), the problem corresponds to the Newtonian \( n \)-body problem, while the limiting case \( \alpha = 0 \) corresponds to the \( n \)-vortex problem. For the limiting case \( \alpha = 0 \), Cors, Hall, and Roberts confirmed Chenciner’s conjecture for any \( n \) \cite{Roberts2014}. For \( \alpha > 0 \), Wang provided a positive answer for \( n = 3 \) and \( n = 4 \) \cite{Wang2023}.
	
	Another interesting approach to Chenciner’s question was initiated by Hampton \cite{Hampton2016}, where he demonstrated that no centered co-circular central configuration exists if the masses are sufficiently close to the equal mass case, although the exact size of this neighborhood remains unknown. Subsequent efforts have sought to determine this size. Corbera and Valls \cite{Corbera2019} showed the nonexistence of centered co-circular central configurations for \( n \) equal masses plus one arbitrary mass. Our previous result \cite{Tang2023} extended this by proving nonexistence for \( n \) equal masses plus two arbitrary masses,  for two groups of equal masses,  and for a group of equal masses combined with a group of heavier (lighter) masses, provided the two groups are arranged in an interlaced manner.

	The aim of this paper is to establish the nonexistence of centered co-circular central configurations for \( n \) equal masses and three arbitrary masses. This nonexistence result holds under the condition that the three special masses are distinct or, 
	if two are equal, they are not arranged in a specific way (see Theorem \ref{thm:threeunequal}).

	After the introduction, the paper is organized as follows. In Section \ref{sec:basic}, we briefly present the definitions and necessary lemmas. In Section \ref{sec:main}, we state and prove our main result.

	\section{Basic settings and lemmas}\label{sec:basic}
	In this section, we briefly introduction the central 
	configurations and list some necessary results. 
	
 Let $q_i \in \mathbb{R}^2$ and $m_i$ be the position 
 and mass, respectively, of the $i$-th body. Let  $r_{i j}=\left\|q_i-q_j\right\|$ be
  the distance between the $i$-th and $j$-th bodies. 
The center of mass is given by $c=\frac{\sum_{i=1}^n m_i q_i}{\sum_{i=1}^n m_i}$. 

Consider the power-law potentials $U_\alpha$ of the form
$$
U_\alpha=\sum_{i<j} \frac{m_i m_j}{r_{i j}^\alpha}, \alpha>0.  
$$
A planar \emph{central configuration } is a special set of distinct positions $q_i \in \mathbb{R}^2$ 
satisfying
$$
\sum_{j \neq i}^n \frac{m_i m_j\left(q_j-q_i\right)}{r_{i j}^{\alpha+2}}+\frac{\lambda}{\alpha} m_i q_i=0 \text { for each } i \in\{1, \ldots, n\}
$$
and for some scalar $\lambda$ independent of $i$.
 Summing up all the equations in the above system yields
  $\sum_i m_i q_i=0$, i.e., the center of mass of the 
  central configuration is at the origin.

The term \emph{centered co-circular central configurations} refers to 
  central configurations that  all  bodies lie on a common
   circle and with 
center of mass coincides with the center of 
the circle. Let ${\bf{m}}=\left(m_1, m_2, \ldots, m_n\right)$. 
Without lose of generality, 
let $q_i=\left(\cos \theta_i, \sin \theta_i\right)$, and 
	\[0<\theta_1<\theta_2<\ldots<\theta_n \leq 2 \pi.  \]
	In this way, the  mass vector   determines the order of the masses on the circle. 
	We also write the  configuration by  $\theta =(\theta_1, \ldots, \theta_n).$
Note that $r_{j k}=\left|2 \sin \frac{\theta_j-\theta_k}{2}\right|.$
Obviously, the set of centered central configurations are invariant under rotations of the circle. 
To remove the symmetry, we specify that $\theta_n=2\pi$. Let 
\begin{align*}
	&\mathcal{K}_{0}=\left\{ \mathbf{\theta}: 0<\theta_{1}<\theta_{2}<\ldots<\theta_{n}=2\pi\right\}, \\
&\mathcal{CC}_{0}=\left\{ (\mathbf{\bf m},\theta)\  \mbox{ is a  centered co-circular central configuration},\, \theta\in\mathcal{K}_{0}\right\} .
\end{align*}

		Cors et al. found  (\cite{Roberts2014}) that a configuration is a centered  co-circular central configuration if 
		\begin{equation}\notag
			\frac{\partial}{\partial\theta_{k}}U_{\alpha}=0,\ \ 
			\sum_{j\ne k}\frac{m_{j}}{r_{jk}^\alpha}=\frac{2\lambda}{\alpha},\ k=1,\ldots ,n.
			\label{eq:cce2}
		\end{equation}

	\begin{lemma} [\cite{Roberts2014}]
		\label{lem:Cors}For any $\mathbf{\bf m},$ there is a unique point
		in $\mathcal{K}_{0}$ satisfying $\frac{\partial}{\partial\theta_{k}}U_{\alpha}=0,k=1,\ldots,n.$
		Moreover, the critical point is a minimum, denoted by $\theta_{\bf m}$. 
	\end{lemma}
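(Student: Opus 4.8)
The plan is to prove both assertions at once by showing that $U_\alpha$, regarded as a function of the free variables $\theta=(\theta_1,\dots,\theta_{n-1})$ with $\theta_n=2\pi$ held fixed, is \emph{strictly convex} on the open convex domain $\mathcal{K}_0$ and blows up along its boundary. Strict convexity immediately gives that any critical point is unique and is the global minimum, while the boundary blow-up together with compactness of $\overline{\mathcal{K}_0}$ supplies existence of an interior minimizer. (Because $U_\alpha$ depends only on the differences $\theta_j-\theta_i$, one has $\sum_k \partial U_\alpha/\partial\theta_k\equiv 0$, so the $n$ stationarity equations reduce to the $n-1$ equations in the free variables, which is exactly the critical-point condition for the restricted function.)

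First I would reduce the convexity question to a one-variable computation. Since $r_{ij}=2|\sin\frac{\theta_j-\theta_i}{2}|$ and on $\mathcal{K}_0$ each difference $\theta_j-\theta_i$ lies in $(0,2\pi)$, I can write $U_\alpha=\sum_{i<j}m_im_j\, f(\theta_j-\theta_i)$ with $f(\phi)=\big(2\sin\tfrac{\phi}{2}\big)^{-\alpha}$ on $(0,2\pi)$. A direct differentiation gives $f''(\phi)=\tfrac{\alpha}{4}\,2^{-\alpha}\sin^{-\alpha-2}(\tfrac{\phi}{2})\big(\alpha\cos^2\tfrac{\phi}{2}+1\big)$, which is strictly positive for every $\alpha>0$ and $\phi\in(0,2\pi)$. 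Hence each pairwise term is a strictly convex function of the scalar $\theta_j-\theta_i$.

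Next I would assemble these into a Hessian estimate. Each difference $\theta_j-\theta_i$ is an affine function of $\theta$, so each term $m_im_j f(\theta_j-\theta_i)$ is convex, and its Hessian is the rank-one positive semidefinite matrix $m_im_j f''(\cdot)\,v_{ij}v_{ij}^{\top}$, where $v_{ij}=e_j-e_i$ when $j<n$ and $v_{in}=-e_i$ (recall $e_n=0$ since $\theta_n$ is fixed). The $n-1$ pairs of the form $(i,n)$ alone contribute $\sum_{i=1}^{n-1} m_im_n f''(\cdot)\,e_ie_i^{\top}$, a diagonal matrix with strictly positive entries, hence positive definite on $\mathbb{R}^{n-1}$; adding the remaining positive semidefinite terms preserves positive definiteness throughout $\mathcal{K}_0$. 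This establishes strict convexity.

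Finally I would settle existence and conclude. The closure $\overline{\mathcal{K}_0}=\{0\le\theta_1\le\dots\le\theta_{n-1}\le2\pi\}$ is compact, and on every boundary face at least two consecutive bodies coincide (including $\theta_1\to0$ and $\theta_{n-1}\to2\pi$, which are collisions with the body at $\theta_n=2\pi$), so some $r_{ij}\to0$ and $U_\alpha\to+\infty$. Therefore the infimum of $U_\alpha$ is attained at an interior point $\theta_{\mathbf m}$, necessarily a critical point; by strict convexity it is the unique critical point and the global minimum. I expect the only delicate step to be the strict-convexity claim: namely, obtaining the sign of $f''$ cleanly and, more importantly, upgrading the Hessian from merely semidefinite to positive definite—which is handled precisely by isolating the $(i,n)$ pairs, whose direction vectors $e_1,\dots,e_{n-1}$ already span $\mathbb{R}^{n-1}$.
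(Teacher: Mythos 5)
Your proof is correct: the computation $f''(\phi)=\tfrac{\alpha}{4}2^{-\alpha}\sin^{-\alpha-2}(\tfrac{\phi}{2})\bigl(\alpha\cos^2\tfrac{\phi}{2}+1\bigr)>0$, the rank-one decomposition of the Hessian with positive definiteness supplied by the $(i,n)$ pairs, the boundary blow-up giving existence, and the reduction of the $n$ stationarity equations to $n-1$ via rotation invariance are all sound. Note that the paper itself offers no proof of this lemma but cites Cors--Hall--Roberts, whose argument is essentially this same convexity-plus-boundary-blow-up strategy (carried out in the gap variables $\theta_{i+1}-\theta_i$ on a simplex rather than by fixing $\theta_n$ and using $\theta_1,\ldots,\theta_{n-1}$), so your proposal matches the intended proof up to a choice of coordinates.
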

	
	
	The dihedral group, $D_{n}$, acts on the set $\mathbb{R}_{+}^{n}\times\mathcal{K}_{0}$
	as followes. Denote
	
	\[
	P=\left(\begin{array}{cccccc}
		0 & 1 & 0 & \ldots & 0 & 0\\
		0 & 0 & 1 & \ldots & 0 & 0\\
		. & . & . & \ldots & . & .\\
		0 & 0 & 0 & \ldots & 0 & 1\\
		1 & 0 & 0 & \ldots & 0 & 0
	\end{array}\right),\ \ S=\left(\begin{array}{cccccc}
		0 & 0 & \ldots & 0 & 1 & 0\\
		0 & 0 & \ldots & 1 & 0 & 0\\
		. & . & \ldots & . & . & .\\
		1 & 0 & \ldots & 0 & 0 & 0\\
		0 & 0 & \ldots & 0 & 0 & 1
	\end{array}\right),
	\]
	\[
	\mathcal{P}=\left(\begin{array}{cccccc}
		-1 & 1 & 0 & \ldots & 0 & 0\\
		-1 & 0 & 1 & \ldots & 0 & 0\\
		. & . & . & \ldots & . & .\\
		-1 & 0 & 0 & \ldots & 0 & 1\\
		0 & 0 & 0 & \ldots & 0 & 1
	\end{array}\right),\ \ \mathcal{S}=\left(\begin{array}{cccccc}
		0 & 0 & \ldots & 0 & -1 & 1\\
		0 & 0 & \ldots & -1 & 0 & 1\\
		. & . & \ldots & . & . & .\\
		-1 & 0 & \ldots & 0 & 0 & 1\\
		0 & 0 & \ldots & 0 & 0 & 1
	\end{array}\right).
	\]
	The action of $D_{n}$ on $\mathbb{R}_{+}^{n}$ is by the matrix group
	generated by $P,S$, and the action of $D_{n}$ on $\mathcal{K}_{0}$
	is by the matrix group generated by $\mathcal{P},\mathcal{S}$. For
	any $g=P^{h}S^{l}\in D_{n},$ letting $\hat{g}=\mathcal{P}^{h}\mathcal{S}^{l},$define
	the action of $D_{n}$ on $\mathbb{R}_{+}^{n}\times\mathcal{K}_{0}$
	by 
	\[
	g\cdot(\mathbf{m,}\theta)=(g\mathbf{\bf m},\hat{g}\theta).
	\]

	\begin{lemma}\cite{Tang2023} \label{lem:key0}
		Assume that $(\mathbf{m,}\theta_{\mathbf{\bf m}})\in\mathcal{CC}_{0}$
		is a centered co-circular central configuration, then
		\begin{enumerate}
			\item For any $g\in D_{n}$, $g\cdot(\mathbf{m,}\theta_{\mathbf{\bf m}})\in\mathcal{CC}_{0}$. 
			\item $U_{\alpha}(\mathbf{m,}\theta_{\mathbf{\bf m}})=U_{\alpha}(g\mathbf{m,}\hat{g}\theta_{\mathbf{\bf m}})\le U_{\alpha}(g\mathbf{m,}\theta_{\mathbf{\bf m}})$
			and $\hat{g}\theta_{\mathbf{\bf m}}=\theta_{g\mathbf{\bf m}}.$ 
			\item $\mathbf{m=}g\mathbf{\bf m}$ implies $\hat{g}\theta_{\mathbf{\bf m}}=\theta_{\mathbf{\bf m}}$. 
		\end{enumerate}
	\end{lemma}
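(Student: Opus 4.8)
The plan is to derive all three items from a single master identity expressing that the $D_n$ action only relabels the bodies and rigidly moves the configuration, so it leaves the potential unchanged. Concretely, I would first establish, for each generator and hence for every $g\in D_n$, the invariance
\[
U_\alpha(g\mathbf m,\hat g\theta)=U_\alpha(\mathbf m,\theta)\quad\text{for all }\theta\in\mathcal K_0,
\]
together with the fact that $\hat g$ is a linear bijection of $\mathcal K_0$ onto itself. Geometrically, $\mathcal P$ implements the rotation $\theta\mapsto\theta-\theta_1$ followed by the cyclic relabeling $P$, while $\mathcal S$ implements the reflection $\theta\mapsto 2\pi-\theta$ followed by the order-reversing relabeling $S$; in both cases the multiset of pairs $(m_im_j,r_{ij})$ is unchanged, so $U_\alpha=\sum_{i<j}m_im_j r_{ij}^{-\alpha}$ is preserved. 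Because $U_\alpha$ is symmetric in the labels, it suffices to check that the chord lengths $r_{jk}=|2\sin\frac{\theta_j-\theta_k}{2}|$ are permuted correctly, which reduces to the two generators.

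Granting the master identity, the relation $\hat g\theta_{\mathbf m}=\theta_{g\mathbf m}$ in item 2 follows quickly: setting $\phi=\hat g\theta$ gives $U_\alpha(g\mathbf m,\phi)=U_\alpha(\mathbf m,\hat g^{-1}\phi)$, so by the chain rule $\hat g$ carries critical points of $U_\alpha(\mathbf m,\cdot)$ to critical points of $U_\alpha(g\mathbf m,\cdot)$ on $\mathcal K_0$. Since Lemma \ref{lem:Cors} guarantees each such function has a unique critical point (its minimizer), we get $\hat g\theta_{\mathbf m}=\theta_{g\mathbf m}$. Item 3 is then immediate: if $g\mathbf m=\mathbf m$ then $\theta_{g\mathbf m}=\theta_{\mathbf m}$, whence $\hat g\theta_{\mathbf m}=\theta_{\mathbf m}$.

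For item 1, I would invoke the characterization of Cors et al.: a point of $\mathcal K_0$ is a centered co-circular central configuration precisely when $\partial_{\theta_k}U_\alpha=0$ for all $k$ and $\sum_{j\ne k}m_j r_{jk}^{-\alpha}$ is a constant independent of $k$. The first condition holds at $\hat g\theta_{\mathbf m}=\theta_{g\mathbf m}$ by Lemma \ref{lem:Cors}. The second is preserved because the action merely permutes the bodies: the value of $\sum_{j\ne k}m_j r_{jk}^{-\alpha}$ at a body of the new system equals its value at the corresponding body of the old system, which was the common constant $2\lambda/\alpha$ by hypothesis; hence it is still constant in $k$, and $(g\mathbf m,\hat g\theta_{\mathbf m})\in\mathcal{CC}_0$. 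Finally, for the two remaining statements in item 2, the equality $U_\alpha(\mathbf m,\theta_{\mathbf m})=U_\alpha(g\mathbf m,\hat g\theta_{\mathbf m})$ is the master identity evaluated at $\theta=\theta_{\mathbf m}$, and the inequality $U_\alpha(g\mathbf m,\hat g\theta_{\mathbf m})\le U_\alpha(g\mathbf m,\theta_{\mathbf m})$ holds because $\hat g\theta_{\mathbf m}=\theta_{g\mathbf m}$ minimizes $U_\alpha(g\mathbf m,\cdot)$ over $\mathcal K_0$ (Lemma \ref{lem:Cors}), while $\theta_{\mathbf m}$ is just some competitor point of $\mathcal K_0$.

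The main obstacle I anticipate is the careful bookkeeping behind the master identity, rather than any conceptual difficulty. For the generator $P$ the rotation sends the first body to angle $2\pi$, so the ``wrap-around'' pairs must be handled using $|\sin(\tfrac{x}{2}-\pi)|=|\sin\tfrac{x}{2}|$ to see that the relevant chord lengths are genuinely preserved; one must also confirm that $\hat g$ respects the strict ordering defining $\mathcal K_0$. In addition, one should check that $g\mapsto\hat g$ is a homomorphism, i.e. that $\mathcal P,\mathcal S$ satisfy the dihedral relations $\mathcal P^{\,n}=\mathcal S^{2}=I$ and $\mathcal S\mathcal P\mathcal S=\mathcal P^{-1}$, so that the action on $\mathbb R_+^n\times\mathcal K_0$ is well defined independently of the word $P^hS^l$ chosen to represent $g$. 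These verifications are elementary but are the part demanding genuine care.
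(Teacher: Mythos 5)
Your proposal is correct: the invariance $U_\alpha(g\mathbf m,\hat g\theta)=U_\alpha(\mathbf m,\theta)$ under the simultaneous relabeling/rigid-motion action, combined with the uniqueness of the critical point in Lemma \ref{lem:Cors}, is exactly the mechanism that yields all three items, and your flagged bookkeeping points (wrap-around chords, dihedral relations, $\hat g$ preserving $\mathcal K_0$) are indeed the only delicate steps and they do check out. Note that this paper gives no proof of Lemma \ref{lem:key0}, quoting it from \cite{Tang2023}; your argument is essentially the standard one used there, so there is no genuine divergence to report.
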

	
	Consider the symmetric matrix $H_{\mathbf{{\bf m}}}$, which is determined
	by $\mathbf{m}$ and the corresponding $\theta_{{\bf m}}$, by $\text{(\ensuremath{H_{{\bf m}})_{ij}}=1/\ensuremath{r_{ij}^{\alpha}}}$ when
	$i\ne j$, and $(\ensuremath{H_{{\bf m}})_{ii}}=0.$  A direct
	consequence of Lemma \ref{lem:key0} is the following. 

	\begin{lemma}\cite{Tang2023}
		\label{lemma:key1} Given $\mathbf{m}$ and the corresponding $\theta_{\bf m}$,
		if there is some $g\in D_{n}$ such that $H_{\mathbf{\bf m}}(g \mathbf{m}-\mathbf{m})<0$,
		then $(\mathbf{m},\theta_{\bf m})\notin\mathcal{CC}_{0}$. 
	\end{lemma}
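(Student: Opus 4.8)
The plan is to read the hypothesis $H_{\mathbf{m}}(g\mathbf{m}-\mathbf{m})<0$ as the statement that the quadratic form $(g\mathbf{m}-\mathbf{m})^{T}H_{\mathbf{m}}(g\mathbf{m}-\mathbf{m})$ is negative, and to argue by contraposition: assuming $(\mathbf{m},\theta_{\bf m})\in\mathcal{CC}_{0}$, I will show this quadratic form is necessarily nonnegative. The starting observation is that the potential at a co-circular configuration is a quadratic form in the masses, namely $U_{\alpha}(\mathbf{m},\theta_{\bf m})=\tfrac{1}{2}\mathbf{m}^{T}H_{\mathbf{m}}\mathbf{m}$, since the diagonal of $H_{\mathbf{m}}$ vanishes and $\sum_{i<j}=\tfrac12\sum_{i\ne j}$. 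Because the matrix $H_{\mathbf{m}}$ is built from the distances of the configuration $\theta_{\bf m}$ alone and not from the mass labels, replacing $\mathbf{m}$ by $g\mathbf{m}$ while keeping the same configuration gives $U_{\alpha}(g\mathbf{m},\theta_{\bf m})=\tfrac12(g\mathbf{m})^{T}H_{\mathbf{m}}(g\mathbf{m})$ with the very same matrix.

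Next I would record what membership in $\mathcal{CC}_{0}$ actually says. By Lemma \ref{lem:Cors} the first-order conditions force $\theta=\theta_{\bf m}$, so the only extra content is the second condition $\sum_{j\ne k}m_{j}/r_{jk}^{\alpha}=2\lambda/\alpha$ for all $k$. Since $(H_{\mathbf{m}}\mathbf{m})_{k}=\sum_{j\ne k}m_{j}/r_{jk}^{\alpha}$, this is precisely the assertion that $H_{\mathbf{m}}\mathbf{m}=c\,\mathbf{1}$ for some constant $c$, where $\mathbf{1}$ is the all-ones vector. This reformulation is what linearizes the problem.

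Now set $\mathbf{v}=g\mathbf{m}-\mathbf{m}$ and expand the potential difference. Using that $H_{\mathbf{m}}$ is symmetric,
\[
U_{\alpha}(g\mathbf{m},\theta_{\bf m})-U_{\alpha}(\mathbf{m},\theta_{\bf m})=\mathbf{m}^{T}H_{\mathbf{m}}\mathbf{v}+\tfrac12\,\mathbf{v}^{T}H_{\mathbf{m}}\mathbf{v}.
\]
The cross term is the step I expect to carry the whole argument. Because $g$ acts on the masses through the permutation matrices generated by $P$ and $S$, it preserves the total mass, so $\mathbf{1}^{T}\mathbf{v}=\mathbf{1}^{T}(g\mathbf{m})-\mathbf{1}^{T}\mathbf{m}=0$; combining this with the central-configuration identity $H_{\mathbf{m}}\mathbf{m}=c\,\mathbf{1}$ gives $\mathbf{m}^{T}H_{\mathbf{m}}\mathbf{v}=(H_{\mathbf{m}}\mathbf{m})^{T}\mathbf{v}=c\,\mathbf{1}^{T}\mathbf{v}=0$. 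Hence the difference collapses to $\tfrac12(g\mathbf{m}-\mathbf{m})^{T}H_{\mathbf{m}}(g\mathbf{m}-\mathbf{m})$.

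Finally I would invoke Lemma \ref{lem:key0}(2): for any $g\in D_{n}$ one has $U_{\alpha}(\mathbf{m},\theta_{\bf m})\le U_{\alpha}(g\mathbf{m},\theta_{\bf m})$, so the left-hand difference is nonnegative, forcing $(g\mathbf{m}-\mathbf{m})^{T}H_{\mathbf{m}}(g\mathbf{m}-\mathbf{m})\ge 0$. This contradicts the hypothesis that the quadratic form is negative, so no such $\mathbf{m}$ can lie in $\mathcal{CC}_{0}$. The only genuinely delicate point is the vanishing of the cross term $\mathbf{m}^{T}H_{\mathbf{m}}\mathbf{v}$, which hinges on using both facts at once, $\mathbf{1}^{T}\mathbf{v}=0$ and $H_{\mathbf{m}}\mathbf{m}=c\,\mathbf{1}$; the remainder is routine bookkeeping with the substitution $g\mathbf{m}=\mathbf{m}+\mathbf{v}$.
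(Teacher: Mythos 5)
Your proof is correct and is essentially the paper's own argument: the paper presents this lemma as a direct consequence of Lemma \ref{lem:key0}, and your derivation — writing $U_\alpha$ as the quadratic form $\tfrac12\mathbf{m}^{T}H_{\mathbf{m}}\mathbf{m}$, using the centered-CC condition $H_{\mathbf{m}}\mathbf{m}=\tfrac{2\lambda}{\alpha}\mathbf{1}$ together with the fact that $g$ permutes the masses (so $\mathbf{1}^{T}(g\mathbf{m}-\mathbf{m})=0$) to kill the cross term, and then invoking the inequality $U_{\alpha}(\mathbf{m},\theta_{\mathbf{m}})\le U_{\alpha}(g\mathbf{m},\theta_{\mathbf{m}})$ from Lemma \ref{lem:key0}(2) — is exactly the intended chain of reasoning from \cite{Tang2023}. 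No gaps.
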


	When investigating the sign of $H_{\mathbf{\bf m}}(g\text{\textbf{m}}-\mathbf{m})$, 
	the following fact  is useful. 
	\begin{lemma} \cite{Tang2023}
		\label{lemma:keyquadri} Suppose that four points
		A, B, C, D, are ordered counterclockwise on a circle. See Figure  \ref{fig:2quadri}.  Then for
		any $\alpha>0,$ it holds that
		
		\[
		\frac{1}{AC^{\alpha}}+\frac{1}{BD^{\alpha}}-(\frac{1}{AD^{\alpha}}+\frac{1}{BC^{\alpha}})<0.
		\]
	\end{lemma}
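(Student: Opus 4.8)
The plan is to pass to arc-length coordinates on the circle and recognise the inequality as an instance of convexity (a two-point majorization estimate). Since a homothety of ratio $R$ multiplies both sides by the same positive factor $R^{-\alpha}$, I may assume the circle is the unit circle. Write the four points at angles $a<b<c<d$ in counterclockwise order, so that every chord length equals $2\sin\frac{\Delta}{2}$, where $\Delta\in(0,2\pi)$ is the corresponding angular gap and hence $\frac{\Delta}{2}\in(0,\pi)$. Setting
$$p=\tfrac{d-a}{2},\quad q=\tfrac{c-b}{2},\quad r=\tfrac{c-a}{2},\quad s=\tfrac{d-b}{2},$$
the two ``side'' chords are $AD=2\sin p$ and $BC=2\sin q$, while the two diagonals are $AC=2\sin r$ and $BD=2\sin s$, and all four arguments lie in $(0,\pi)$.

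The decisive structural observation is that $p+q=r+s$ (both equal $\tfrac{(c+d)-(a+b)}{2}$), while the four strict inequalities $p-r=\tfrac{d-c}{2}>0$, $p-s=\tfrac{b-a}{2}>0$, $r-q=\tfrac{b-a}{2}>0$, $s-q=\tfrac{d-c}{2}>0$ are immediate from $a<b<c<d$. Thus $p$ is the largest and $q$ the smallest of the four, i.e. the pair of side-arcs $(p,q)$ majorizes the pair of diagonal-arcs $(r,s)$ while having the same sum. Writing $f(t)=(2\sin t)^{-\alpha}$, the claim is exactly $f(r)+f(s)<f(p)+f(q)$, namely that the more spread-out pair yields the larger value of the $f$-sum. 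I would make this self-contained (rather than quoting Karamata) by introducing the common midpoint $M=\tfrac{p+q}{2}=\tfrac{r+s}{2}$ together with $u_1=\tfrac{p-q}{2}$ and $u_2=\tfrac{|r-s|}{2}$, so that $\{p,q\}=\{M\pm u_1\}$, $\{r,s\}=\{M\pm u_2\}$, and $u_1>u_2\ge 0$. Studying $g(u)=f(M+u)+f(M-u)$, one gets $g'(u)=f'(M+u)-f'(M-u)$, which is positive as soon as $f'$ is increasing, so $g$ is strictly increasing and $g(u_1)>g(u_2)$ delivers the inequality.

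Everything therefore reduces to the convexity of $f(t)=(2\sin t)^{-\alpha}$ on $(0,\pi)$, which is the single computation to carry out. A direct differentiation gives
$$f''(t)=\alpha\,(2\sin t)^{-\alpha}\,(\sin t)^{-2}\big(\alpha\cos^2 t+1\big),$$
which is strictly positive for every $\alpha>0$ and every $t\in(0,\pi)$; hence $f'$ is strictly increasing on the whole interval, and since all intermediate values $M\pm u$ with $u\in[u_2,u_1]$ remain inside $(0,\pi)$, the monotonicity argument above applies verbatim. I expect no genuine obstacle here: the only point requiring care is that $f$ is \emph{not} monotone on $(0,\pi)$, so one cannot argue naively that ``the diagonals are longer than the sides.'' This is precisely why the proof is routed through convexity and majorization in the arc variable rather than through chord lengths directly, and the positivity of $f''$ is what makes the scheme work uniformly for all $\alpha>0$.
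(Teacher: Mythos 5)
Your proposal is correct, but note that the paper does not prove this lemma at all: it is quoted verbatim from the authors' earlier work \cite{Tang2023}, so there is no in-paper argument to compare against. Your proof is a sound, self-contained substitute. The reduction to arc coordinates is set up correctly: with $p=\tfrac{d-a}{2}$, $q=\tfrac{c-b}{2}$, $r=\tfrac{c-a}{2}$, $s=\tfrac{d-b}{2}$ one indeed has $p+q=r+s$, and the four strict differences $p-r=p-s'=\dots$ that you list (each equal to $\tfrac{d-c}{2}$ or $\tfrac{b-a}{2}$) are all correct, so $q<\min(r,s)\le\max(r,s)<p$ and hence $u_1=\tfrac{p-q}{2}>u_2=\tfrac{|r-s|}{2}\ge 0$. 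The convexity computation also checks out: for $f(t)=(2\sin t)^{-\alpha}$ one gets
\[
f''(t)=\alpha\,(2\sin t)^{-\alpha}\,(\sin t)^{-2}\bigl(\alpha\cos^{2}t+1\bigr)>0
\quad\text{on }(0,\pi),
\]
exactly as you state, and the symmetric-spread function $g(u)=f(M+u)+f(M-u)$ is then strictly increasing in $u$ because $g'(u)=f'(M+u)-f'(M-u)>0$ for $u>0$, with all evaluation points staying inside $(0,\pi)$. Your closing caution is also well taken and is the genuinely delicate point: the chord-length map $t\mapsto 2\sin t$ is not monotone on $(0,\pi)$, so no argument phrased purely in terms of which chords are ``longer'' can work; routing the comparison through convexity in the arc variable is what makes the inequality hold uniformly for every $\alpha>0$.
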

	
	\begin{figure}[h!]
		\centering
		\includegraphics[scale=0.6]{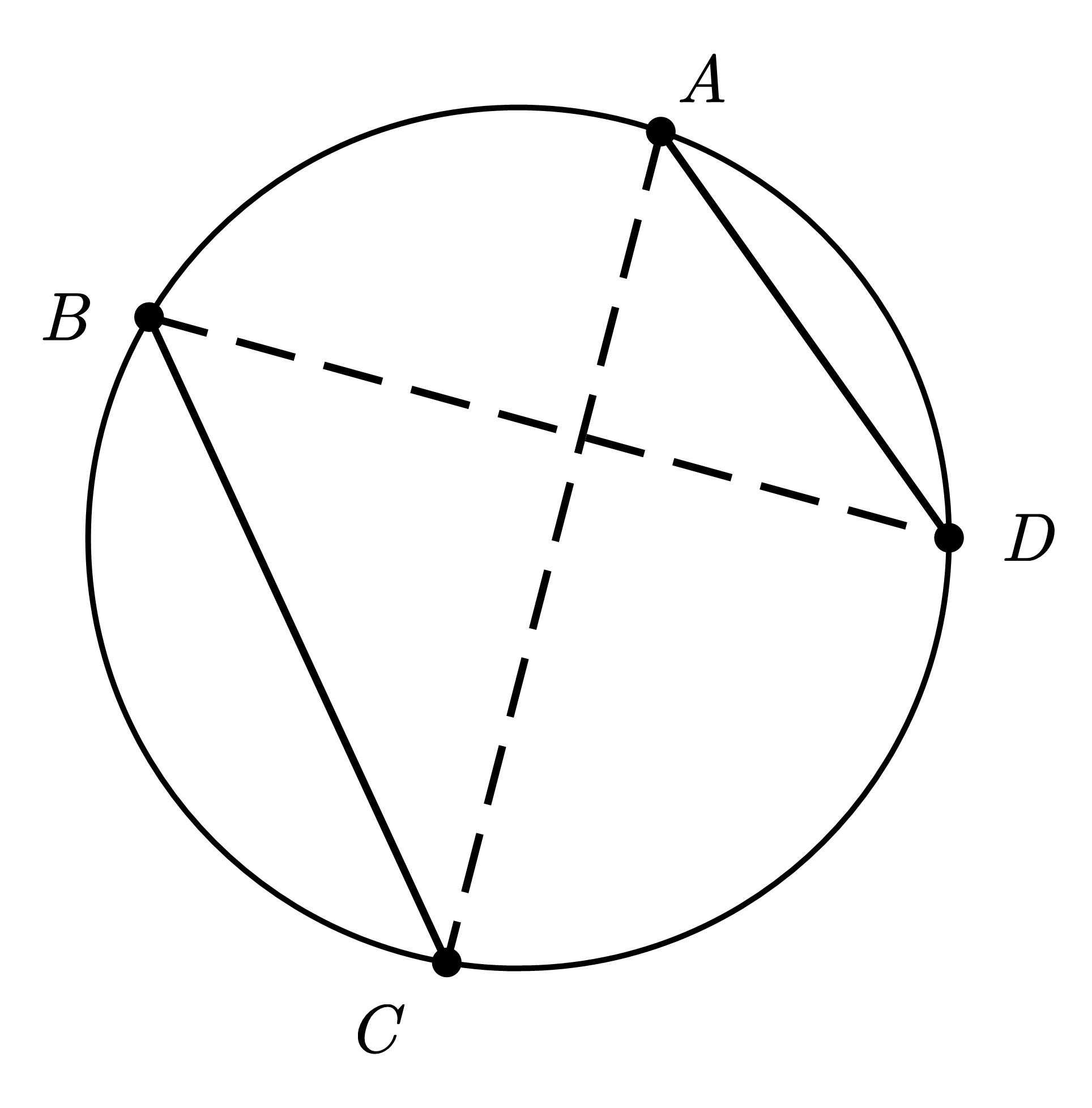}  
		\caption{The dashed lines correspond to the positive terms, while the solid black lines  correspond to the negative terms.
		}
		\label{fig:2quadri}
	\end{figure}
	
The following two results will be essential for our discussion:

		\begin{theorem} \cite{Tang2023}
		\label{thm:twounequal} In the general power
		law potential n-body problem, no centered co-circular central configurations exist when precisely $n-2$ of the masses are equal.
	\end{theorem}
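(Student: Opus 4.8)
The plan is to show that for the mass vector $\mathbf{m}$ in question the minimizing configuration $\theta_{\mathbf{m}}$ — the only candidate, by Lemma \ref{lem:Cors} — fails to be centered, by exhibiting a single group element $g\in D_{n}$ for which the quantity $H_{\mathbf{m}}(g\mathbf{m}-\mathbf{m})$ of Lemma \ref{lemma:key1}, i.e. the quadratic form $(g\mathbf{m}-\mathbf{m})^{T}H_{\mathbf{m}}(g\mathbf{m}-\mathbf{m})$, is strictly negative; Lemma \ref{lemma:key1} then yields $(\mathbf{m},\theta_{\mathbf{m}})\notin\mathcal{CC}_{0}$, hence no centered co-circular central configuration. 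The essential simplification is that $n-2$ of the masses agree, say all equal to $m$, with only two special masses $M_{1},M_{2}$ at positions $p<q$; any $g\in D_{n}$ permutes the equal masses among themselves, so $g\mathbf{m}-\mathbf{m}$ is supported on at most the four slots touched by the two special masses, and the form reduces to a handful of terms controlled by Lemma \ref{lemma:keyquadri}. I would split into two cases according to whether the two special masses coincide.

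If $M_{1}\neq M_{2}$, I take $g$ to be the reflection $i\mapsto (p+q)-i \pmod n$, which is one of the $n$ reflections in $D_{n}$ and simply interchanges the two special slots while preserving the set of equal masses. Then $g\mathbf{m}-\mathbf{m}$ has only the two nonzero entries $M_{2}-M_{1}$ at $p$ and $M_{1}-M_{2}$ at $q$; since $(H_{\mathbf{m}})_{pp}=(H_{\mathbf{m}})_{qq}=0$, the form collapses to $-2(M_{1}-M_{2})^{2}/r_{pq}^{\alpha}<0$. This settles the distinct case at once, and it is the easy half.

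If $M_{1}=M_{2}=M\neq m$, the swap above is useless (it fixes $\mathbf{m}$), so I take $g=P$, the rotation by one slot. When the two special masses are non-adjacent, $g\mathbf{m}-\mathbf{m}$ has four nonzero entries, $+(M-m)$ at $p-1,q-1$ and $-(M-m)$ at $p,q$, and crucially these signs \emph{alternate} around the circle, so the two positive (same-sign) contributions come precisely from the diagonals of the inscribed quadrilateral with vertices $p-1,p,q-1,q$. Writing out the form, it equals $(M-m)^{2}$ times the difference between the reciprocal diagonals and two opposite reciprocal sides, minus two further negative side-terms; the difference is negative by Lemma \ref{lemma:keyquadri}, so the whole form is negative. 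When the two special masses are adjacent ($q=p+1$), the one-slot rotation overlaps them — the middle slot retains mass $M$ — leaving only the two entries $+(M-m)$ at $p-1$ and $-(M-m)$ at $p+1$, and the form is again $-2(M-m)^{2}/r_{p-1,p+1}^{\alpha}<0$. In every case Lemma \ref{lemma:key1} applies.

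The main obstacle is exactly the equal-mass case: a rotation necessarily drags both copies of $M$ simultaneously, generically producing four charges, and Lemma \ref{lemma:keyquadri} delivers negativity only when these charges alternate in sign around the circle. Rotating by two or more slots, or using a reflection, would place two like charges consecutively and yield the opposite, useless direction of the inequality. The resolution I would insist on is to rotate by exactly one slot: this forces alternation in the non-adjacent subcase and, in the adjacent subcase, collapses the four charges to two by overlap, so a single uniform choice of $g$ disposes of both. I would finally emphasize that the argument uses nothing about the precise angles $\theta_{\mathbf{m}}$, only the cyclic order of the bodies, which is exactly what makes Lemmas \ref{lemma:key1} and \ref{lemma:keyquadri} usable without ever solving for the configuration.
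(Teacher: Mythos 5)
Your proof is correct, but be aware that this paper never proves Theorem \ref{thm:twounequal}: it is imported from \cite{Tang2023} as a black box, so there is no internal proof to compare against. What can be said is that your argument is built entirely from the machinery the paper does state (Lemma \ref{lemma:key1} plus Lemma \ref{lemma:keyquadri}), and both halves check out. Your distinct-mass case, via the reflection $i\mapsto(p+q)-i \pmod n$, is exactly the style of argument the paper itself uses elsewhere (compare case 1 of Lemma \ref{lemma:two-sides}, where $g=S$ swaps the two special masses when $l+s=n$), giving the form $-2(M_1-M_2)^2/r_{pq}^{\alpha}<0$. Your equal-mass case is where you genuinely depart from the paper's scheme: when $M_1=M_2=M\neq m$ the masses form two groups of equal masses (sizes $n-2$ and $2$), a situation this paper would dispatch by citing Theorem \ref{thm:2group-of-equal-masses} (also imported from \cite{Tang2023}); instead you prove it directly with the one-slot rotation $g=P$, and the computation is right --- the four charges $+,-,+,-$ at $p-1,p,q-1,q$ alternate cyclically, the same-sign pairs sit on the diagonals of the inscribed quadrilateral so Lemma \ref{lemma:keyquadri} makes that block negative, the two leftover terms $-2(M-m)^2/r_{p-1,p}^{\alpha}$ and $-2(M-m)^2/r_{q-1,q}^{\alpha}$ are negative, and adjacency collapses everything to two charges. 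This self-contained rotation argument is a nice addition, since the paper's own displayed proofs only ever use the reflection $S$. Two minor points: treat indices mod $n$ explicitly so that $p=1$, $q=n$ is recognized as the adjacent case; and your claim that rotating by two or more slots ``would place two like charges consecutively'' is overstated (a $k$-slot rotation still alternates whenever $k$ is smaller than both circular gaps between the special masses) --- but that remark is motivational, not load-bearing.
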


		\begin{theorem} \cite{Tang2023}
		\label{thm:2group-of-equal-masses} 
		In the general power-law potential n-body problem,  when masses can be grouped into two sets of equal masses, no centered co-circular central configurations  exist  unless all masses are equal.
	\end{theorem}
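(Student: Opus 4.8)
The plan is to argue by contradiction and to reduce everything to the quadratic-form criterion of Lemma~\ref{lemma:key1}. Denote the two mass values by $a\neq b$, and suppose that the configuration $(\mathbf{m},\theta_{\mathbf{m}})$ assembled from these two groups belongs to $\mathcal{CC}_{0}$. By Lemma~\ref{lemma:key1} it is enough to produce a single $g\in D_{n}$ with $H_{\mathbf{m}}(g\mathbf{m}-\mathbf{m})<0$; here the relevant scalar is the quadratic form $(g\mathbf{m}-\mathbf{m})^{\top}H_{\mathbf{m}}(g\mathbf{m}-\mathbf{m})$, the linear contribution dropping out because at a centered co-circular central configuration $H_{\mathbf{m}}\mathbf{m}$ is a constant multiple of $(1,\dots,1)^{\top}$ (this is the relation $\sum_{j\neq k}m_{j}r_{jk}^{-\alpha}=2\lambda/\alpha$), while $g$ preserves the total mass. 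Thus the entire task is to make this form negative for one well-chosen $g$.

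First I would take $g=P$, the cyclic shift by one position, so that $g\mathbf{m}-\mathbf{m}=(a-b)\mathbf{u}$ with $u_{i}\neq 0$ exactly when $m_{i}\neq m_{i+1}$, i.e. at the boundaries between the maximal monochromatic blocks on the circle. Traversing the circle counterclockwise, these boundaries alternate between the two types ($A\to B$ and $B\to A$), so the nonzero entries of $\mathbf{u}$, read in cyclic order, are strictly alternating $+1,-1,+1,-1,\dots$; moreover, since $a\neq b$ and both values occur, there are at least two boundaries and $\mathbf{u}\neq 0$. Because $(g\mathbf{m}-\mathbf{m})^{\top}H_{\mathbf{m}}(g\mathbf{m}-\mathbf{m})=2(a-b)^{2}\sum_{i<j}u_{i}u_{j}\,r_{ij}^{-\alpha}$, the proof is reduced to the purely geometric statement: \emph{for any even set of points on a circle carrying strictly alternating signs $\varepsilon_{i}=\pm1$, one has $\sum_{i<j}\varepsilon_{i}\varepsilon_{j}\,r_{ij}^{-\alpha}<0$.} A pleasant feature of this reduction is that the single element $g=P$ settles every arrangement at once, so no case analysis on the multiplicities or on the ordering is required.

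I would then prove this alternating-sign inequality by induction on the number $2k$ of marked points, using Lemma~\ref{lemma:keyquadri} as the only input. The base case $k=1$ is $-r_{12}^{-\alpha}<0$. For the inductive step I peel off the last two marked points $P_{2k-1},P_{2k}$, which carry opposite signs; the full sum then splits into the sum over the first $2k-2$ points (negative by induction), the single term $-r_{2k-1,2k}^{-\alpha}<0$, and the cross terms between $\{P_{2k-1},P_{2k}\}$ and the rest. Grouping the remaining points into consecutive pairs $\{P_{2\ell-1},P_{2\ell}\}$, the four points $P_{2\ell-1},P_{2\ell},P_{2k-1},P_{2k}$ occur in cyclic order with the equal-sign points sitting on the two diagonals, so their cross-term contribution is exactly $r_{2\ell-1,2k-1}^{-\alpha}+r_{2\ell,2k}^{-\alpha}-r_{2\ell-1,2k}^{-\alpha}-r_{2\ell,2k-1}^{-\alpha}$, which is negative by Lemma~\ref{lemma:keyquadri}. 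Summing, every piece is negative and the induction closes, contradicting the assumed membership in $\mathcal{CC}_{0}$. The main obstacle is precisely this bookkeeping: one must peel and pair the alternating points so that in each resulting quadrilateral the two equal-sign vertices are the endpoints of the diagonals, since only then does Lemma~\ref{lemma:keyquadri} deliver the correct (negative) sign; once the signs are lined up this way the estimate is immediate.
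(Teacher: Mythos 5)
Your proof is correct, but note first that the paper never proves Theorem~\ref{thm:2group-of-equal-masses} at all: it is imported from \cite{Tang2023} as a known result, so the comparison here is with the methodology the paper builds on and uses for its new results, not with a proof printed in the text. Your argument stays inside that methodology---reduce to the sign of a quadratic form via Lemma~\ref{lemma:key1}, then force negativity with Lemma~\ref{lemma:keyquadri}---but it differs in two substantive ways from how the paper itself deploys these tools (in Proposition~\ref{prop:threesp} and Lemmas~\ref{lemma:one-side}, \ref{lemma:two-sides}). First, every argument written out in the paper uses the reflection $g=S$, which produces a difference vector $g\mathbf{m}-\mathbf{m}$ supported on at most four positions, so one application of Lemma~\ref{lemma:keyquadri} suffices; you instead take the cyclic shift $g=P$, whose difference vector $(a-b)\mathbf{u}$ is supported on \emph{all} block boundaries, of which there can be arbitrarily many. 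Second, to handle that unbounded support you isolate a genuinely new statement---the alternating-sign inequality $\sum_{i<j}\varepsilon_i\varepsilon_j r_{ij}^{-\alpha}<0$ for an even number of circle points with strictly alternating signs---and prove it by induction. I checked that induction: peeling off the last two marked points leaves $2k-2$ points that still alternate (so the hypothesis applies), the peeled pair contributes $-r_{2k-1,2k}^{-\alpha}<0$, and each cross-term block $r_{2\ell-1,2k-1}^{-\alpha}+r_{2\ell,2k}^{-\alpha}-r_{2\ell-1,2k}^{-\alpha}-r_{2\ell,2k-1}^{-\alpha}$ is exactly the quantity Lemma~\ref{lemma:keyquadri} makes negative, since the equal-sign vertices sit at the ends of the diagonals; moreover $\mathbf{u}\neq 0$ precisely because both mass values occur, which is where the hypothesis ``unless all masses are equal'' enters. (Your preliminary remark re-deriving why the linear term drops out is redundant---that is already the content of Lemma~\ref{lemma:key1}---but harmless.) What your route buys is uniformity: a single group element and a single induction cover every arrangement and every pair of group sizes, with no case analysis on the block structure, yielding a self-contained proof of a result the paper otherwise delegates to the reference.
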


	\section{Nonexistence of centered co-circular central configurations }\label{sec:main}

In this section, we will establish the following result:
	\begin{theorem}
		\label{thm:threeunequal} Consider a co-circular configuration $(\mathbf{m},\theta_{\bf m})$ of 
		the general power-law potential $n$-body problem. Assume all but three masses are equal and that the three special masses are not ordered symmetrically.  Then this configuration cannot be a centered co-circular central configuration.
	\end{theorem}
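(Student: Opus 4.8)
My plan is to apply the reduction criterion of Lemma \ref{lemma:key1}: to show $(\mathbf{m},\theta_{\bf m})\notin\mathcal{CC}_0$ it suffices to exhibit a single $g\in D_n$ for which the quadratic form $(g\mathbf{m}-\mathbf{m})^{\top}H_{\bf m}(g\mathbf{m}-\mathbf{m})<0$, reading the inequality of Lemma \ref{lemma:key1} in this quadratic sense. Since $g\mathbf{m}$ is a permutation of $\mathbf{m}$, the vector $\mathbf{v}:=g\mathbf{m}-\mathbf{m}$ is supported only on the slots whose mass is changed by $g$, and $(g\mathbf{m}-\mathbf{m})^{\top}H_{\bf m}(g\mathbf{m}-\mathbf{m})=2\sum_{i<j}v_iv_j/r_{ij}^{\alpha}$. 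Writing the common value of the $n-3$ equal bodies as $m$ and the three special masses as $\mu_1,\mu_2,\mu_3$, I would first dispose of the degenerate groupings: if the three special values coincide we have two groups of equal masses and Theorem \ref{thm:2group-of-equal-masses} applies, while if they collapse to two distinct special values so that exactly $n-2$ masses are equal, Theorem \ref{thm:twounequal} applies. This leaves the two essential configurations: (II) exactly two of the special masses are equal, say $\mu_1=\mu_2=\mu\ne\mu_3$, all different from $m$; and (I) the three special masses are pairwise distinct and all different from $m$.

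For case (II) I would take $g$ to be the reflection whose axis is the perpendicular bisector of the chord joining the two equal masses. This is always a symmetry of the labelled $n$-gon, it swaps the two equal slots (so their entries of $\mathbf{v}$ vanish), and it maps every bulk slot to a bulk slot. If the arrangement is not symmetric, the odd special slot $i_3$ lies off the axis and is sent to some bulk slot $f$; then $\mathbf{v}$ is supported only on $\{i_3,f\}$ with $v_{i_3}=m-\mu_3=-v_f$, so the form equals $-2(m-\mu_3)^2/r_{i_3f}^{\alpha}<0$ and Lemma \ref{lemma:key1} finishes the case. Precisely when the two equal masses are placed symmetrically about $i_3$ does the axis pass through $i_3$; then $g\mathbf{m}=\mathbf{m}$, $\mathbf{v}=0$, and this reflection yields no information, which is exactly the excluded symmetric ordering.

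Case (I) is the hard part: in general no reflection fixes a special mass, so I would swap a pair of special masses. Swapping the pair at slots $i_p,i_q$ sends the third special slot $i_r$ to a bulk slot $f$, and using the chord equalities forced by the reflection ($r_{i_pi_r}=r_{i_qf}$ and $r_{i_pf}=r_{i_qi_r}$) the form collapses to
\[
-\frac{(\mu_p-\mu_q)^2}{r_{i_pi_q}^{\alpha}}-\frac{(m-\mu_r)^2}{r_{i_rf}^{\alpha}}+2(\mu_q-\mu_p)(m-\mu_r)\Big(\frac{1}{r_{i_pi_r}^{\alpha}}-\frac{1}{r_{i_qi_r}^{\alpha}}\Big).
\]
The two squared terms are negative, so everything hinges on the cross term, whose sign is the product of the sign of $(\mu_q-\mu_p)(m-\mu_r)$ and the comparison of the chords $r_{i_pi_r}$ and $r_{i_qi_r}$, exactly the kind of comparison that Lemma \ref{lemma:keyquadri} governs. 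If for one of the three pairings the cross term is non-positive, Lemma \ref{lemma:key1} applies at once. The main obstacle is the remaining situation in which every pairing makes its cross term strictly positive: here I would argue by contradiction, using that a centered co-circular central configuration forces the form of each of the three reflections to be $\ge 0$, and combining these three positivity constraints on the (unknown) minimizing chords with the quadrilateral inequality of Lemma \ref{lemma:keyquadri} applied to the four points $i_p,i_q,i_r,f$. Carrying this out requires a case analysis on the cyclic order of the special slots and on the signs of the mass differences $\mu_p-\mu_q$ and $m-\mu_r$, and it is this combinatorial and geometric bookkeeping, rather than any single inequality, that I expect to be the core difficulty.
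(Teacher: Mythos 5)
Your reduction to Lemma \ref{lemma:key1} is the right framework, and your case (II) argument is correct and in fact cleaner than the paper's own treatment: the element of $D_n$ acting on indices by $j \mapsto i_1+i_2-j \pmod n$ swaps the two equal special slots, sends every bulk slot to a bulk slot except the image $f$ of $i_3$, and when the ordering is not symmetric (i.e.\ $f \ne i_3$) the form collapses to $-2(m-\mu_3)^2/r_{i_3 f}^{\alpha}<0$. (The paper instead gets this case, its Proposition \ref{prp:two-equal-masses}, by combining Lemma \ref{lemma:one-side}, Lemma \ref{lemma:two-sides} and Theorem \ref{thm:twounequal}.) One caution even here: describe $g$ combinatorially as an index reflection; ``the perpendicular bisector of the chord'' is geometric language that does not literally apply, since $\theta_{\bf m}$ is not a regular polygon.

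The genuine gap is in case (I), which is the heart of the theorem, and it is threefold. First, the chord equalities $r_{i_p i_r}=r_{i_q f}$ and $r_{i_p f}=r_{i_q i_r}$ you use to collapse the form are unjustified: they would follow from $\hat{g}\theta_{\bf m}=\theta_{\bf m}$, but Lemma \ref{lem:key0}(3) guarantees that symmetry only when $g\mathbf{m}=\mathbf{m}$, which fails precisely because your $g$ swaps two \emph{unequal} special masses. Second, even without collapsing, the cross term equals $2(\mu_q-\mu_p)(m-\mu_r)\bigl(r_{i_p i_r}^{-\alpha}-r_{i_p f}^{-\alpha}-r_{i_q i_r}^{-\alpha}+r_{i_q f}^{-\alpha}\bigr)$, and since $g$ swaps $i_p\leftrightarrow i_q$ and $i_r\leftrightarrow f$, the cyclic order of the four moved slots is $i_p, i_r, i_q, f$ (up to relabeling), so $\{i_p,i_q\}$ and $\{i_r,f\}$ are the \emph{diagonals} of the quadrilateral and the four distances in the bracket are its four \emph{sides}. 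Lemma \ref{lemma:keyquadri} compares diagonals against a pair of opposite sides; it says nothing about one pair of opposite sides versus the other, so the sign of your cross term genuinely depends on the unknown minimizing configuration $\theta_{\bf m}$. Third, you concede that the situation where all three pairings have positive cross terms is unresolved; that is exactly the case your choice of reflections cannot close, and no amount of bookkeeping with Lemma \ref{lemma:keyquadri} will fix it. The paper avoids all of this by choosing, for each pair of special masses, the reflection that \emph{fixes the third special slot} (e.g.\ $g=S$, which fixes slot $n$): then the four moved slots form two special--bulk swapped pairs, the cross-term bracket is exactly of the diagonals-versus-opposite-sides form covered by Lemma \ref{lemma:keyquadri}, and one obtains purely combinatorial positional constraints (Lemma \ref{lemma:one-side}, Lemma \ref{lemma:two-sides} and their Remarks) that are mutually incompatible for every ordering of $m_l, m_s, m_n$ relative to $1$. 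If you want to salvage your plan, replace your pair-swapping reflections with these ``fix one special mass'' reflections.
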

	
The term \emph{special mass} refers to one of the three unequal masses. We say that three masses are \emph{ordered symmetrically} in a co-circular configuration if two of the masses are equal, and the number of masses between each of the two equal masses and the third mass (counted along the circle) is the same.   See Figure \ref{fig:symmetric}. 
	
	 \begin{figure}[h!]
		\centering
	\includegraphics[scale=0.6]{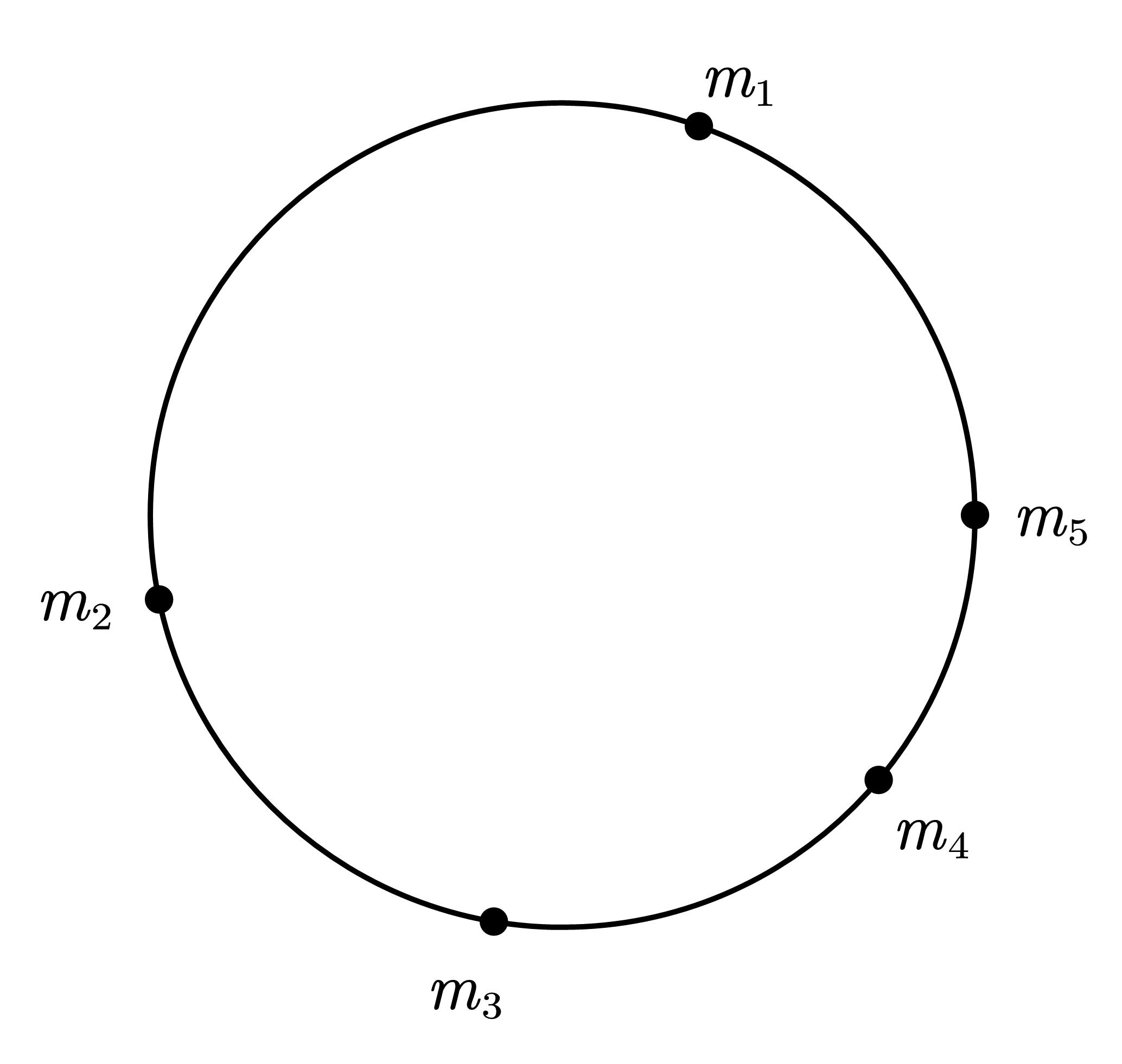} 
		\caption{A five-body co-circular configuration with $m_2=m_3$.   
		We say that the three masses $m_2, m_3, m_5$ are ordered symmetrically.   }
		\label{fig:symmetric}
	\end{figure}

	The idea of the proof is the following:  
	 Consider the corresponding  
	quadratic form $H_{\bf m}$. If there is a   $g\in D_n$  so that $H_{\bf m}(g\text{\textbf{m}}-\mathbf{m})$ is negative, then Lemma \ref{lemma:key1} yields the desired result.

		Without lose of generality, assume that the equal masses are 1. Assume 
		the three special masses are  positioned  at the $l$-th, $s$-th, and $n$-th locations ($1\le l<s<n$) respectively,  their masses are $m_l, m_s, m_n$, and   these masses are not equal to $1$. Thus, the mass vector is: 
		$$
		\mathbf{m}=(1,\ldots,1,m_l,1,\ldots,1,m_s,1,\ldots,1,m_n)
		$$

	We start by considering the following specific case:
	\begin{proposition}
		\label{prop:threesp}
		Consider a co-circular configuration $(\mathbf{m},\theta_{\bf m})$ of the general power-law potential \( n\)-body problem. Suppose that  $n$ is even,  all but three masses are equal,  and two  special masses 
		are separated by exactly \( \frac{n}{2} -1\) masses (counted on the circle). Then 
		the configuration is not a  centered co-circular central configuration.  
	\end{proposition}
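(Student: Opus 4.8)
The plan is to apply Lemma \ref{lemma:key1}: it suffices to produce a single $g\in D_n$ for which the quadratic form $H_{\bf m}(g\mathbf{m}-\mathbf{m})$ is strictly negative. First I would translate the hypothesis geometrically. Since $n$ is even and the two designated special masses are separated by exactly $\frac n2-1$ bodies along the circle, the other arc between them also contains $\frac n2-1$ bodies, so they occupy a pair of index-antipodal positions, say $a$ and $a+\frac n2$ (indices mod $n$). Denote by $b$ the position of the remaining special mass, with $m_b\neq 1$. A first instinct might be to use the half-turn $P^{n/2}$, but that moves all three special masses and leaves an indefinite cross-term; a reflection is preferable because it can pin down the two antipodal special masses and leave essentially nothing to estimate.

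For $g$ I would take the reflection in $D_n$ whose axis is the diameter through the vertices $a$ and $a+\frac n2$; such a vertex-to-vertex reflection exists precisely because $n$ is even, and this is the only place evenness is used. This $g$ fixes the positions $a$ and $a+\frac n2$ and sends every other position $k$ to its mirror image $2a-k\pmod n$. Consequently $g\mathbf{m}$ agrees with $\mathbf{m}$ at $a$ and $a+\frac n2$ (the two antipodal special masses are untouched), and at each regular position it again returns a regular mass $1$, with the single exception of the site $b':=2a-b$, onto which the third special mass is carried. The one fact the whole argument hinges on is that $b'$ is an ordinary position, i.e. $b'\notin\{a,\,a+\tfrac n2,\,b\}$: each of these three equalities forces $b\in\{a,a+\frac n2\}$, contradicting that $b$ carries a third, distinct special mass. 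In particular $b'\neq b$, so $b$ is genuinely displaced, and this bookkeeping — confirming the two antipodal masses are fixed while the third lands on a regular site — is really the main (if modest) obstacle.

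With this in hand the computation collapses. The vector $v:=g\mathbf{m}-\mathbf{m}$ is supported on exactly the two positions $b$ and $b'$, with $v_b=1-m_b$ and $v_{b'}=m_b-1=-v_b$, so only the pair $(b,b')$ contributes to the form:
\[
H_{\bf m}(v)=2\,\frac{v_b v_{b'}}{r_{bb'}^{\alpha}}=-\,\frac{2\,(m_b-1)^2}{r_{bb'}^{\alpha}}<0,
\]
where the strict inequality uses $m_b\neq 1$. By Lemma \ref{lemma:key1} the configuration $(\mathbf{m},\theta_{\bf m})$ is therefore not a centered co-circular central configuration, which is the assertion of the Proposition. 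Note that no appeal to Lemma \ref{lemma:keyquadri} is needed here, since the reflection eliminates every cross-term; that quadrilateral estimate will instead be the workhorse in the remaining, non-antipodal cases of Theorem \ref{thm:threeunequal}.
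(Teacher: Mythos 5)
Your proof is correct and is essentially the paper's own argument: the paper normalizes coordinates so the antipodal pair sits at positions $\frac n2$ and $n$ and takes $g=S$, which is exactly your reflection through the diameter containing the two antipodal special masses, yielding the same two-point-supported vector $g\mathbf{m}-\mathbf{m}$ and the same value $-2(m_b-1)^2/r_{bb'}^{\alpha}<0$, followed by Lemma \ref{lemma:key1}. Your coordinate-free bookkeeping (checking $b'\notin\{a,a+\frac n2,b\}$) is just the invariant form of the paper's ``without loss of generality'' normalization.
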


	\begin{proof}
		Without loss of generality,  assume that $m_s$ and $m_n$ are   separated by  \( \frac{n}{2} -1\) masses. Then $s= \frac{n}{2}$ and 
		 the mass distribution
		$$
		\mathbf{m}=(1,\ldots,1,m_l,1,\ldots,1,m_\frac{n}{2},1,\ldots,1,m_n)
		$$
		Let $g=S$. Then
		$$
		g\mathbf{m}-\mathbf{m}= (1-m_l)(0,\ldots,1,0,\ldots,0,-1,0,\ldots,0), 
		$$
		where the nonzero terms are  at the $l$-th  and $(n-l)$-th positions. 
		So, 
		$$
		H_{\mathbf{\bf m}}(g\text{\textbf{m}}-\mathbf{m})= -2(1-m_l)^2\frac{1}{r^\alpha_{l, n-l}} <0, 
		$$
		By Lemma \ref{lemma:key1}, $(\mathbf{m},\theta_{\bf m})$ is not 
		a centered co-circular central configuration. 
	\end{proof}
	
    Hence, in the following, we assume that there are not two special masses  separated by  \( \frac{n}{2} -1\) masses, or equivaently, the following holds:
	\begin{equation}\notag \label{equ:no-opposite}
	 s-l- \frac{n}{2}\ne 0, \, n-s-\frac{n}{2}\ne 0,\,  n-l- \frac{n}{2}\ne 0.   
	\end{equation}

	We now establish two critical lemmas to handle the general cases.
	
	\begin{lemma}\label{lemma:one-side}
		Suppose that $(\mathbf{m},\theta_{\bf m})$ is a centered co-circular central configuration 
		of the general power-law potential $n$-body problem. 
		Assume that  all masses equal except three. 
		If  $(m_{l}-1)(m_{s}-1)<0$, then 
		$(s-\frac{n}{2})(l-\frac{n}{2})<0$.  
	\end{lemma}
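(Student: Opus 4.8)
The plan is to use the same strategy embodied in Lemma \ref{lemma:key1}: find a group element $g\in D_n$ so that $H_{\bf m}(g\mathbf{m}-\mathbf{m})<0$, which would contradict the assumption that $(\mathbf{m},\theta_{\bf m})\in\mathcal{CC}_0$. We argue by contraposition: suppose instead that $(s-\tfrac{n}{2})(l-\tfrac{n}{2})\ge 0$, i.e. both $l$ and $s$ lie on the same side of the antipodal index $\tfrac{n}{2}$ (and, after the preceding normalization \eqref{equ:no-opposite}, neither equals it exactly). We want to produce a destabilizing symmetry.

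The natural choice is a reflection $g=P^{h}S$ for a suitable shift $h$, chosen so that $g$ swaps the two indices $l$ and $s$ while fixing all the equal masses as a set. Concretely, a reflection of $D_n$ that interchanges positions $l$ and $s$ exists precisely when $l+s$ has the right parity relative to the reflection axis; since both $l,s$ sit on the same semicircle away from $\tfrac{n}{2}$, such a reflection can be arranged. For this $g$, the vector $g\mathbf{m}-\mathbf{m}$ is supported on just four positions: $l,s$ and their mirror images, with entries $\pm(m_l-m_s)$ and $\pm(\text{combinations of }m_l-1,\,m_s-1)$. The key structural point is that $H_{\bf m}(g\mathbf{m}-\mathbf{m})$ then collapses to a sum of just a few pairwise interaction terms $1/r^\alpha$, and the hypothesis $(m_l-1)(m_s-1)<0$ forces the relevant coefficient to have a definite sign.

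First I would write out $g\mathbf{m}-\mathbf{m}$ explicitly for the chosen reflection, identifying its (at most four) nonzero components and their signs. Then I would expand $H_{\bf m}(g\mathbf{m}-\mathbf{m})=\sum_{i\ne j}(g\mathbf{m}-\mathbf{m})_i(g\mathbf{m}-\mathbf{m})_j / r_{ij}^\alpha$ and group the terms so that the four-point configuration formed by $l$, $s$, and their reflections matches the hypothesis of Lemma \ref{lemma:keyquadri}. The factor $(m_l-1)(m_s-1)<0$ should combine with the cross terms so that the whole expression reduces (up to a positive scalar) to a difference of the form $1/AC^\alpha+1/BD^\alpha-(1/AD^\alpha+1/BC^\alpha)$, which is negative by that lemma whenever the four points are in convex (counterclockwise) position on the circle.

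The main obstacle I anticipate is the bookkeeping of which reflection to pick and verifying that the four relevant points are genuinely in the cyclic order required by Lemma \ref{lemma:keyquadri} — this is exactly where the assumption $(s-\tfrac{n}{2})(l-\tfrac{n}{2})\ge 0$ must be used, since it guarantees $l$ and $s$ lie together on one arc so that $l,s$ and their images interleave correctly rather than degenerating or reversing orientation. A secondary subtlety is that the reflection only maps the equal-mass block to itself setwise, so one must confirm no spurious contribution arises from the third special mass $m_n$; placing the reflection axis through (or symmetric about) the $n$-th position, when possible, keeps $m_n$ fixed and isolates the interaction to the $l,s$ pair and their mirror images. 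Once the correct $g$ is identified and the orientation verified, the sign conclusion is immediate from Lemma \ref{lemma:keyquadri}, completing the contrapositive.
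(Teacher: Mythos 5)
Your overall framework --- contraposition, Lemma \ref{lemma:key1}, a reflection in $D_n$, and Lemma \ref{lemma:keyquadri} --- is exactly the paper's, but the reflection you commit to is the wrong one, and with it the argument fails. A reflection interchanging positions $l$ and $s$ is the map $i \mapsto l+s-i \pmod n$; under your contradiction hypothesis both $l,s$ lie strictly on the same side of $\frac{n}{2}$, so $l+s \not\equiv 0 \pmod n$ and this reflection necessarily moves position $n$, displacing the third special mass onto an equal-mass site. The difference vector $g\mathbf{m}-\mathbf{m}$ is then supported on $\{l,\,s,\,n,\,l+s \bmod n\}$ with entries $\pm(m_s-m_l)$ and $\pm(1-m_n)$ --- not, as you write, on ``$l,s$ and their mirror images'' with entries built from $m_l-1$ and $m_s-1$ (under this reflection $l$ and $s$ \emph{are} each other's mirror images, so that description is internally inconsistent). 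Consequently every cross term in $H_{\bf m}(g\mathbf{m}-\mathbf{m})$ carries the coefficient $(m_s-m_l)(1-m_n)$, about which the hypothesis $(m_l-1)(m_s-1)<0$ says nothing; the sign of the quadratic form cannot be determined, and the hypothesis is never used. This is a genuine gap, not bookkeeping.

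The fix is precisely what you relegate to a ``secondary subtlety'': take the axis through position $n$, i.e.\ $g=S$, $i\mapsto n-i$, which is the paper's choice. This $g$ does \emph{not} swap $l$ and $s$; it sends $l\mapsto n-l$, $s\mapsto n-s$, so $g\mathbf{m}-\mathbf{m}$ is supported on $\{l,s,n-s,n-l\}$ with entries $1-m_l,\;1-m_s,\;m_s-1,\;m_l-1$, and the cross terms all carry the coefficient $\pm(1-m_l)(1-m_s)$, whose sign is exactly the hypothesis. The contradiction assumption (made strict by the normalization excluding antipodal special masses, and reduced WLOG to $l<s<\frac{n}{2}$) places the four points in the cyclic order $l,\,s,\,n-s,\,n-l$, so Lemma \ref{lemma:keyquadri} gives
$-\frac{1}{r_{ls}^{\alpha}}+\frac{1}{r_{l,n-s}^{\alpha}}+\frac{1}{r_{s,n-l}^{\alpha}}-\frac{1}{r_{n-s,n-l}^{\alpha}}<0$,
while $(1-m_l)(m_s-1)>0$; together with the negative square terms $-\frac{(1-m_l)^2}{r_{l,n-l}^{\alpha}}-\frac{(1-m_s)^2}{r_{s,n-s}^{\alpha}}$ this yields $H_{\bf m}(g\mathbf{m}-\mathbf{m})<0$ and the contradiction. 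So commit to $g=S$ from the outset and discard the swap idea; as written, your proposal hinges on a reflection for which the sign argument does not close.
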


	\begin{figure}[h!]
		\centering
	  \includegraphics[scale=0.6]{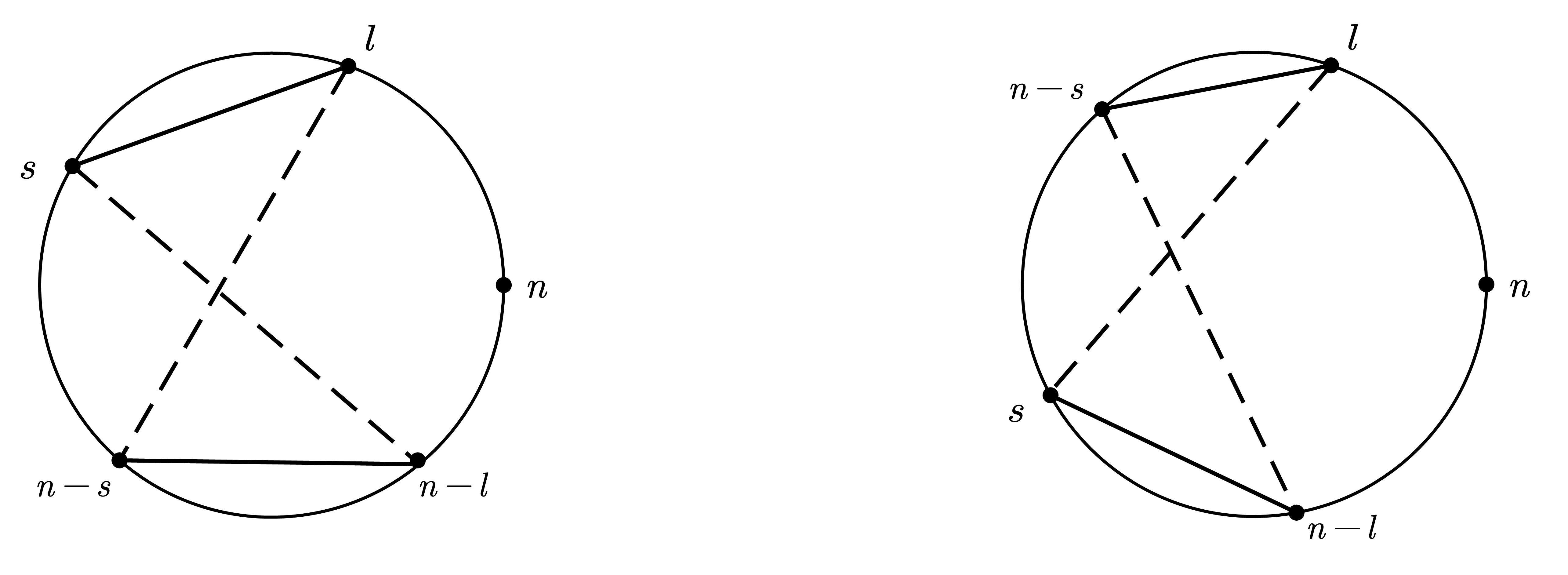}  
		\caption{Left: \( m_l \) and \( m_s \)  both lie in the first half of the 
		circle (i.e., among the positions \( 1, 2, ..., [\frac{n}{2}] \) );  ($[c]$ denotes the greatest integer less than or equal to $c$. )
		Right: \( m_l \)  lies in the first half of the circle, \( m_s\)  lies in the second half, and $\frac{n}{2} -l > s-\frac{n}{2}$.  The dashed lines correspond to the positive terms, while the solid  lines  correspond to the negative terms.
		}
		\label{fig:sameside}
	\end{figure}

	\begin{proof}
Suppose for contradiction  that $(s-\frac{n}{2})(l-\frac{n}{2})\ge 0$.  Since the masses are not separated by $\frac{n}{2}-1$ masses,  we assume  $s<\frac{n}{2}$. See Figure \ref{fig:sameside}, left. 
		The mass distribution is 
		$$
		\mathbf{m}=(1,\ldots,1,m_l,1,\ldots,1,m_s,1,\ldots,1,m_n)
		$$
		Let $g=S$.  Then
		$$
		g\mathbf{m}-\mathbf{m}=(0,\ldots,0,1-m_l,0,\ldots,0,1-m_s,0,\ldots,0,m_s-1,0,\ldots,0,m_l-1,0,\ldots,0), 
		$$
		where the nonzero terms are  at the $l$-th,  $s$-th, $(n-s)$-th, and $(n-l)$-th positions. 
		So, 
		\begin{align*}
		H_{\mathbf{\bf m}}(g\text{\textbf{m}}-\mathbf{m})&=
	-\frac{(1-m_l)^2}{r_{l,n-l}^\alpha}-\frac{(1-m_s)^2}{r_{s,n-s}^\alpha}+\\
&	2(1-m_l)(m_s-1) \left( - \frac{1}{r_{l s}^{\alpha}}+   \frac{1}{r_{l, n-s}^{\alpha}}    +  \frac{1}{r_{s, n-l}^{\alpha}}  - \frac{1}{r_{n-s, n-l}^{\alpha}}    \right)\\
	&<0, 
		\end{align*}
which contradicts with  Lemma \ref{lemma:key1}. 	Here, the inequality is by Lemma \ref{lemma:keyquadri}. 
	Hence, $(s-\frac{n}{2})(l-\frac{n}{2})<0$.   
	\end{proof}

\begin{remark}\label{rmk:one-side}
Under the same assumptions as Lemma \ref{lemma:one-side}, 
we can similarly deduce:
\begin{itemize}
	\item If $(m_{s}-1)(m_{n}-1)<0$, then $(s-l-\frac{n}{2})(n-l-\frac{n}{2})<0$; 
	\item If $(m_{l}-1)(m_{n}-1)<0$, then $(l+n -s-\frac{n}{2})(n-s-\frac{n}{2})<0$. 
\end{itemize}
\end{remark}

	\begin{lemma}\label{lemma:two-sides}
		Suppose that $(\mathbf{m},\theta_{\bf m})$ is a centered co-circular central configuration 
		of the general power-law potential $n$-body problem. 
		Assume that all masses are equal except three and that the three special masses are not  ordered symmetrically. 
		If $(m_{l}-1)(m_{s}-1)>0,$   then  $(l-\frac{n}{2})(s-\frac{n}{2})>0$. 
	\end{lemma}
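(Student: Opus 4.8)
The plan is to argue by contradiction, running the same reflection argument as in Lemma~\ref{lemma:one-side} but carefully tracking the sign reversals that occur once the two positions sit on opposite halves of the circle. Suppose, contrary to the assertion, that $(l-\tfrac n2)(s-\tfrac n2)\le 0$. The standing assumption that no two special masses are separated by $\tfrac n2-1$ positions includes $n-l-\tfrac n2\ne 0$ and $n-s-\tfrac n2\ne 0$, i.e.\ $l\ne\tfrac n2$ and $s\ne\tfrac n2$; hence the product is in fact strictly negative, and since $l<s$ this forces $l<\tfrac n2<s$.

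Next I would again take $g=S$. Exactly as in Lemma~\ref{lemma:one-side}, $g\mathbf m-\mathbf m$ is supported on the four positions $l,s,n-s,n-l$ and
\[
H_{\mathbf m}(g\mathbf m-\mathbf m)=-\frac{(1-m_l)^2}{r_{l,n-l}^\alpha}-\frac{(1-m_s)^2}{r_{s,n-s}^\alpha}+2(1-m_l)(m_s-1)\left(-\frac{1}{r_{ls}^\alpha}+\frac{1}{r_{l,n-s}^\alpha}+\frac{1}{r_{s,n-l}^\alpha}-\frac{1}{r_{n-s,n-l}^\alpha}\right).
\]
The two square terms are strictly negative since $m_l,m_s\ne 1$. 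The novelty is a double sign change relative to the one-side case. First, the hypothesis $(m_l-1)(m_s-1)>0$ makes the cross coefficient $2(1-m_l)(m_s-1)$ negative rather than positive. Second, because $l<\tfrac n2<s$, the four points now appear around the circle in the order $l,\,n-s,\,s,\,n-l$ (when $l+s<n$; the case $l+s>n$ is identical after renaming), so that $r_{ls}$ and $r_{n-s,n-l}$ are the two \emph{diagonals} of the cyclic quadrilateral rather than a pair of opposite sides. Applying Lemma~\ref{lemma:keyquadri} to this (cyclically relabelled) quadrilateral bounds the diagonal sum by the opposite-side sum over $r_{l,n-s},r_{s,n-l}$, which makes the bracketed factor positive. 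The two reversals cancel, the cross term is again negative, and so $H_{\mathbf m}(g\mathbf m-\mathbf m)<0$, contradicting Lemma~\ref{lemma:key1}.

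The one configuration on which this choice of $g$ can fail, and the reason the symmetry hypothesis is invoked, is the borderline value $l+s=n$. There the reflection $S$ interchanges positions $l$ and $s$, so $g\mathbf m-\mathbf m$ is supported on just these two points with opposite entries $\pm(m_s-m_l)$, giving $H_{\mathbf m}(g\mathbf m-\mathbf m)=-2(m_s-m_l)^2/r_{ls}^\alpha$. This is strictly negative exactly when $m_l\ne m_s$, again producing the contradiction; but if $m_l=m_s$ the vector vanishes and $S$ yields nothing. I expect this to be the main obstacle, and its resolution is precisely the hypothesis: $l+s=n$ with $l<\tfrac n2<s$ places $m_l$ and $m_s$ symmetrically about the position $n$ of $m_n$, so $m_l=m_s$ here means the three special masses are ordered symmetrically, which is excluded. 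Hence the degenerate case never arises, every subcase yields $H_{\mathbf m}(g\mathbf m-\mathbf m)<0$, and therefore $(l-\tfrac n2)(s-\tfrac n2)>0$.
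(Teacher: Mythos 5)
Your proposal is correct and follows essentially the same route as the paper: argue by contradiction with the reflection $g=S$, handle $l+s\ne n$ via the cyclic quadrilateral ordering $l,\,n-s,\,s,\,n-l$ and Lemma~\ref{lemma:keyquadri}, and dispose of the degenerate case $l+s=n$ by observing that $m_l=m_s$ there would mean the three special masses are ordered symmetrically, which the hypothesis excludes. Your sign bookkeeping (negative cross coefficient times positive bracket) is just an algebraic rewriting of the paper's form $2(m_l-1)(m_s-1)$ times a negative bracket, so the two arguments coincide.
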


	\begin{proof}
	Suppose  by contradiction that $(s-\frac{n}{2})(l-\frac{n}{2})<0$.  Then $l<\frac{n}{2}$ and  $s>\frac{n}{2}$.  
		There are two cases: 
		
		1. $l+s=n$. Then by assumption,  it holds that $m_l\ne m_s$.
		Let $g=S$.  Then
		$$
		g\mathbf{m}-\mathbf{m}=(0,\ldots,0,m_s-m_l,0,\ldots,0,m_l-m_s,0,\ldots,0), 
		$$
	where the nonzero terms are  at the $l$-th,  and $s$-th positions. Thus, 
		$$
		H_{\mathbf{m}}(g\mathbf{m}-\mathbf{m})=-(m_l-m_s)^2\frac{1}{r_{l,s}}<0. 
		$$
		
		 2. $l+s\ne n$.	 Without lose of generality, assume $\frac{n}{2} -l > s-\frac{n}{2}$. See Figure \ref{fig:sameside}, right. 
			Let $g=S$.  Then
		$$
		g\mathbf{m}-\mathbf{m}=(0,\ldots,0,1-m_l,0,\ldots,0,m_s-1,0,\ldots,0,1-m_s,0,\ldots,0,m_l-1,0,\ldots,0), 
		$$
	where the nonzero terms are  at the $l$-th,  $(n-s)$-th , $s$-th, and $(n-l)$-th positions. Thus, 
		\begin{align*}
	H_{\mathbf{\bf m}}(g\text{\textbf{m}}-\mathbf{m})&=
	-\frac{(1-m_l)^2}{r_{l,n-l}^\alpha}-\frac{(1-m_s)^2}{r_{s,n-s}^\alpha}+\\
	&	2(m_l-1)(m_s-1) \left( -\frac{1}{r_{l, n-s}^{\alpha}}   + \frac{1}{r_{l s}^{\alpha}}+   \frac{1}{r_{n-s, n-l}^{\alpha}}  -   \frac{1}{r_{s, n-l}^{\alpha}}   \right)\\
	&<0. 
	\end{align*}
	Here, the inequality is by Lemma \ref{lemma:keyquadri}.

	By Lemma \ref{lemma:key1}, for both cases,  there is no centered co-circular central configuration, a contradiction. Hence, $(s-\frac{n}{2})(l-\frac{n}{2})>0$.   
		\end{proof}

\begin{remark}\label{rmk:two-sides}
	Under the same assumptions as Lemma \ref{lemma:two-sides}, 
	we can similarly deduce:
	\begin{itemize}
		\item If $(m_{s}-1)(m_{n}-1)>0$, then $(s-l-\frac{n}{2})(n-l-\frac{n}{2})>0$; 
		\item If $(m_{l}-1)(m_{n}-1)>0$, then $(l+n -s-\frac{n}{2})(n-s-\frac{n}{2})>0$. 
	\end{itemize}
\end{remark}

	Now, we discuss the case when two masses are equal. 

	\begin{proposition}\label{prp:two-equal-masses}
		Consider a co-circular configuration $(\mathbf{m},\theta_{\bf m})$ of 
		the general power-law potential $n$-body problem. Assume that all masses are equal except three and that the three special
		 masses are not  ordered symmetrically.
		 		If two of the special masses are equal,  
		then the configuration is not a  centered co-circular central configuration. 
	\end{proposition}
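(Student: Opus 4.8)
The plan is to normalize the configuration by the $D_n$-action, use Lemma~\ref{lemma:two-sides} to pin down the location of the equal pair, and then read off a contradiction from Remarks~\ref{rmk:one-side} and~\ref{rmk:two-sides}. First I would invoke Lemma~\ref{lem:key0}: since the $D_n$-action sends centered co-circular central configurations to centered co-circular central configurations and preserves the cyclic pattern of gaps between the special masses (hence the property of being ordered symmetrically), a suitable rotation $P^k$ moves the \emph{distinct} special mass to position $n$. After this reduction the two equal special masses sit at positions $l<s$ with common value $m\ne 1$, while the third special mass $m_n=M\ne 1$ sits at position $n$. In this normalization ``ordered symmetrically'' means precisely $l+s=n$, so the hypothesis gives $l+s\ne n$; moreover, by Proposition~\ref{prop:threesp} we may assume no two special masses are separated by $\frac n2-1$ masses, so in particular $l,s\ne\frac n2$.

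Since $m\ne 1$ we have $(m_l-1)(m_s-1)=(m-1)^2>0$, and because the configuration is not ordered symmetrically, Lemma~\ref{lemma:two-sides} forces $(l-\frac n2)(s-\frac n2)>0$: the two equal masses lie in the same half-circle. Applying the reflection $S$ if necessary (again legitimate by Lemma~\ref{lem:key0}), I would assume $1\le l<s<\frac n2$, so that $0<s-l<\frac n2$. It then remains to split on the sign of $(m-1)(M-1)$. If $(m-1)(M-1)>0$, then the first item of Remark~\ref{rmk:two-sides} yields $(s-l-\frac n2)(n-l-\frac n2)>0$; but $n-l-\frac n2=\frac n2-l>0$ while $s-l-\frac n2<0$, making the product negative, a contradiction. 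If instead $(m-1)(M-1)<0$, then the second item of Remark~\ref{rmk:one-side} yields $(l+n-s-\frac n2)(n-s-\frac n2)<0$; but $n-s-\frac n2=\frac n2-s>0$ and $l+n-s-\frac n2=\frac n2-(s-l)>0$, making the product positive, again a contradiction. (If $M=m$, i.e.\ all three special masses coincide, the first case applies verbatim, or one may simply invoke Theorem~\ref{thm:2group-of-equal-masses}.) In every case Lemma~\ref{lemma:key1} is violated, so no centered co-circular central configuration exists.

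The only real work is bookkeeping. One must check that each $D_n$-reduction genuinely preserves both the centered co-circular property and the not-symmetric hypothesis, and---more importantly---that in each subcase the \emph{correct} item of the correct Remark is used: the ``wrong'' item produces an inequality that is merely consistent rather than contradictory, so the sign pattern only collapses after the reflection normalization $l<s<\frac n2$ has been imposed. I expect no analytic obstacle beyond this careful tracking of which of the two difference-inequalities in Remarks~\ref{rmk:one-side} and~\ref{rmk:two-sides} is forced in each branch.
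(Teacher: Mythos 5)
Your proposal is correct and follows essentially the same route as the paper: reduce by the $D_n$-action so the two equal special masses satisfy $1\le l<s<\frac n2$ (via Lemma \ref{lemma:two-sides}), then play Remarks \ref{rmk:one-side} and \ref{rmk:two-sides} against the sign facts $(s-l-\frac n2)(n-l-\frac n2)<0$ and $(l+n-s-\frac n2)(n-s-\frac n2)>0$. The only difference is organizational: you case-split on the sign of $(m-1)(M-1)$ and apply the remarks in their forward direction, while the paper uses them contrapositively to derive $(m_l-1)(m_n-1)>0$ and $(m_s-1)(m_n-1)<0$ simultaneously and then contradicts $m_l=m_s$ --- logically the same argument.
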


\begin{proof}
	 Without lose of generality, assume that $m_l =m_s$.  
	  Suppose  by contradiction  that $(\mathbf{m}, \theta_{\mathbf{m}}) \in \mathcal{CC}_0$.  Since $(m_l-1)(m_s-1)>0$, 
	  Lemma \ref{lemma:two-sides} yields  
	  $ (l-\frac{n}{2}) (s-\frac{n}{2})>0.$ 
	  Without lose of generality, we further assume that $1\le l< s<\frac{n}{2}$. See Figure \ref{fig:sameside}, left. 
 
 On one hand, 
   note that 
 \[  (l+n-s-\frac{n}{2}) (n-s-\frac{n}{2})>0.  \] 
  Remark \ref{rmk:one-side} after Lemma \ref{lemma:one-side} implies the inequality 
  	$(m_l-1)(m_n-1)\ge 0$. 
   By Theorem \ref{thm:twounequal}, it can not be true that $(m_l-1)(m_n-1)=0$, so 
   \begin{equation}\label{equ:contradiction} 
(m_l-1)(m_n-1)> 0.   
   \end{equation}

  On the other hand, 
 note that 
 \[  (s-l-\frac{n}{2}) (n-l-\frac{n}{2})<0.  \] 
 Remark \ref{rmk:two-sides} after Lemma \ref{lemma:two-sides} implies the inequality 
$	(m_s-1)(m_n-1)\le  0$. 
 By Theorem \ref{thm:twounequal}, it can not be true that $(m_s-1)(m_n-1)=0$, so 
 \[    	(m_s-1)(m_n-1)< 0.    \]  
 This  
 contradicts with \eqref{equ:contradiction}  since $m_s=m_l$. 

Hence, $(\mathbf{m}, \theta_{\mathbf{m}})$ is not a centered co-circular central configuration.
\end{proof}
	
	Now, we are ready to prove Theorem \ref{thm:threeunequal}. 
	
\begin{proof} [Proof of Theorem \ref{thm:threeunequal}] 
	
By Theorem \ref{thm:2group-of-equal-masses}, we assume that the three special 
masses are not all equal.
Therefore, in the following, withou any lose of generality, 
	we assume further that $m_l< m_s< m_n$. There are three cases:

1. $1<m_l<m_s<m_n$, or, $m_l<m_s<m_n<1$.  We exclude this case by  contradiction.   Suppose that $(\mathbf{m}, \theta_{\mathbf{m}}) \in \mathcal{CC}_0$. 
We claim that  the following system of inequalities holds
\begin{equation*}
	\begin{cases}
		(l-\frac{n}{2})(s-\frac{n}{2})>0,\\
		(s-l-\frac{n}{2})(n-l-\frac{n}{2})>0. 
	\end{cases}
\end{equation*}
The first inequality is from the fact  $(m_l-1) (m_s-1)>0$ and  Lemma \ref{lemma:two-sides}. The second one is from the fact  $(m_s-1) (m_n-1)>0$ and Remark \ref{rmk:two-sides} after  Lemma \ref{lemma:two-sides}. 
Obviously,  no solutions exist because these inequalities are mutually incompatible. 
Hence, we exclude this case. 
		
		2. $m_l<1<m_s<m_n$.  We exclude this case by  contradiction.   Suppose that $(\mathbf{m}, \theta_{\mathbf{m}}) \in \mathcal{CC}_0$. 
		We claim that  the following system of inequalities holds
\begin{equation*}
	\begin{cases}
		(l-\frac{n}{2})(s-\frac{n}{2})<0, \\
		(s-l-\frac{n}{2})(n-l-\frac{n}{2})>0, \\  
		(l+n -s-\frac{n}{2})(n-s-\frac{n}{2})<0. 
	\end{cases}
\end{equation*}
The first inequality is from the fact  $(m_l-1) (m_s-1)<0$ and  Lemma \ref{lemma:one-side}. The second one is from the fact  $(m_s-1) (m_n-1)>0$ and Remark \ref{rmk:two-sides} after  Lemma \ref{lemma:two-sides}. The third one is from the fact  $(m_l-1) (m_n-1)<0$ and Remark \ref{rmk:one-side} after  Lemma \ref{lemma:one-side}. 
Obviously,  no solutions exist because these inequalities are mutually incompatible. 
Hence, we exclude this case.

		3. $m_l<m_s<1<m_n$. We exclude this case by  contradiction.   Suppose that $(\mathbf{m}, \theta_{\mathbf{m}}) \in \mathcal{CC}_0$. 
		We claim that  the following system of inequalities holds
		\begin{equation*}
			\begin{cases}
				(l-\frac{n}{2})(s-\frac{n}{2})>0, \\
	(l+n -s-\frac{n}{2})(n-s-\frac{n}{2})<0. 
			\end{cases}
		\end{equation*}
	The first inequality is from the fact  $(m_l-1) (m_s-1)>0$ and  Lemma \ref{lemma:two-sides}. The second one is from the fact  $(m_l-1) (m_n-1)<0$ and Remark \ref{rmk:one-side} after  Lemma \ref{lemma:one-side}. 
	Obviously,  no solutions exist because these inequalities are mutually incompatible. 
	Hence, we exclude this case.

		Therefore,  the proof is completed. 
	\end{proof}

\end{document}